\newtheorem{theorem}{Theorem}
\newcommand{\safemath}[2]{\newcommand{#1}{\ensuremath{#2}\xspace}}
\safemath{\cndis}{\mathcal{CN}}
\safemath{\mysetN}{\mathbb{N}}
\safemath{\mysetC}{\mathbb{C}}
\safemath{\mysetR}{\mathbb{R}}
\safemath{\mysetRP}{\mysetR^\oplus}
\safemath{\mysetZ}{\mathbb{Z}}
\safemath{\ltwoR}{\mathcal{L}^2(\mysetR)}
\safemath{\eff}{{\mathrm{eff}}}
\safemath{\heff}{h_T}
\safemath{\Heff}{H_T}
\safemath{\htrunc}{\tilde{h}}
\safemath{\dspread}{D_s}
\safemath{\toffset}{\epsilon}
\safemath{\powerPenalty}{\mathcal{P}}
\safemath{\nanos}{\mathrm{ns}}
\safemath{\mhz}{\mathrm{MHz}}
\newcommand{\de}{\,\mathrm{d}}
\newcommand{\lefto}{\mathopen{}\left}
\newcommand{\parmark}[1]{\hspace{-\parindent}\textit{#1.}}
\DeclareMathOperator{\fouop}{F}
\title{Theoretical Analysis of the Energy Capture in Strictly Bandlimited
Ultra-Wideband Channels}
\author{%
{Georg B\"ocherer{\small $~^{1}$}, Daniel Bielefeld{\small $~^{2}$},
Rudolf Mathar{\small $~^{3}$} }%
\vspace{1.6mm}\\
\fontsize{10}{10}\selectfont\itshape
Institute for Theoretical Information Technology, RWTH Aachen
University\\
D-52056 Aachen, Germany\\
\fontsize{9}{9}\selectfont\ttfamily\upshape
$^{1}$\,boecherer@ti.rwth-aachen.de\\
$^{2}$\,bielefeld@ti.rwth-aachen.de\\
$^{3}$\,mathar@ti.rwth-aachen.de%
}
\begin{document}
\maketitle

\begin{abstract}
The frequency selectivity of wireless communication channels can
be characterized by the delay spread $D_s$ of the channel impulse
response. If the delay spread is small compared to the bandwidth
$W$ of the input signal, that is, $D_s W\approx 1$, the channel
appears to be flat fading. For $D_s W\gg 1$, the channel appears
to be frequency selective, which is usually the case for wideband
signals. In the first case, small scale synchronization with a
precision much higher than the sampling time $T=1/W$ is crucial to guarantee
the maximum capture of energy at the receiver. In this paper, it is shown by
analytical means that this is different in the wideband regime. Here
synchronization with a precision of $T$ is sufficient and small scale
synchronization cannot further increase the captured energy at the receiver.
Simulation results show that this effect already occurs for $W>50$MHz for the
IEEE 802.15.4a
channel model.
\end{abstract}

\section{Introduction}
\label{sec:intro}
Ultra-wideband communication offers the attractive possibility to achieve high
data rates with low transmission power and is a candidate for the
air interface of next generation wireless personal area
networks~\cite{Arslan2006}. 

The transmitted ultra-wideband signal is
usually subject to frequency selective fading due to multi-path
propagation. In literature, the involved channel has been modeled in different
ways.
Physically inspired approaches aim to give a tapped delay-line model of the
propagation environment in terms of a time-continuous impulse response which
consists of the superposition of arbitrarily delayed and scaled versions of the
signal emitted by the sender. In \cite{Win1998}, the authors refer to the
number of delayed and scaled versions assumed at the receiver as the
\textit{diversity
level} of the receiver and they observe that with an increasing diversity level,
the amount of captured energy increases.

The authors in \cite{Schuster2007} use a more
general model. They assume a strictly bandlimited transmitted
signal and model the channel by the statistics of the taps as observed by the
receiver after appropriate lowpass filtering and sampling of the received
signal with a sampling time according to the sampling theorem. The
obtained time-discrete channel model consists of a finite number
of channel taps. A similar model is used in~\cite{Telatar2000}, where
the sampling time is referred to as the \textit{system resolution}. In
these models, the observed channel taps depend heavily on the
chosen time instants of the sampling process. For different timing
offsets between transmitter and receiver, corresponding to
different sampling time instants, the receiver might observe
different realizations of the channel, although the physical
channel remains the same. In the following, the different
realizations of channel taps for the same physical channel in the
time-discrete model are referred to as \emph{channel candidates}. For a
low number of channel taps, normally corresponding to a low signal
bandwidth, the unknown timing offset can lead to a significant
degradation of the amount of energy that can be captured by the receiver,
since different observed channel candidates have in general different channel
gains. To account for this in practical communication systems, synchronization
algorithms are employed to conduct a precise estimate of the
timing offset and to synchronize the time references of
transmitter and receiver~\cite{Meyr1998}.

In this paper, we analytically investigate the problem of different channel
candidates in the wideband limit. We show that in
this case, all channel candidates have the same channel gain if a
large scale synchronization in the order of a sampling time
interval is assumed. This is not obvious: although the system resolution
increases with increasing bandwith, also the degrees of freedom of the effective
channel increases linearly with the bandwidth \cite{Schuster2007}, which could
cancel out the effect of increasing resolution. This is illustrated in
Figure~\ref{fig:visualization}. Our result implies that for a
large bandwidth of the input signal, it is not necessary to consider the delay
between transmitter and receiver by an additional term in the discrete
time channel model of~\cite{Schuster2007}, since it would not make the model
more representative.

To validate our results and to obtain numerical
values for the amount of the performance degradation for a realistic scenario,
simulations were conducted using the IEEE 802.15.4a channel model. The results
indicate that for a narrow communication bandwidth ($\leq16$MHz), a missing
small scale synchronization can lead to an energy capture degradation of
more than $25\%$. For increasing bandwidth, the degradation vanishes. If we
interpret the corresponding number of considered channel taps as the diversity
level of the receiver, this result is in accordance with the
observation in \cite{Win1998}.

The remainder of the paper is organized as follows. In
Section~\ref{sec:model}, a mathematical description of the
considered problem is given. Based on this, we investigate the problem
analytically in Section~\ref{sec:analysis}. In Section~\ref{sec:simulation}, we
describe the simulation procedure and discuss the obtained results.

\begin{figure*}
\small
\centering
\subfloat[Sampling in narrowband]{
\psfrag{POWER}{Power}
\psfrag{TIME}{Time}
\includegraphics[width=0.48\linewidth]{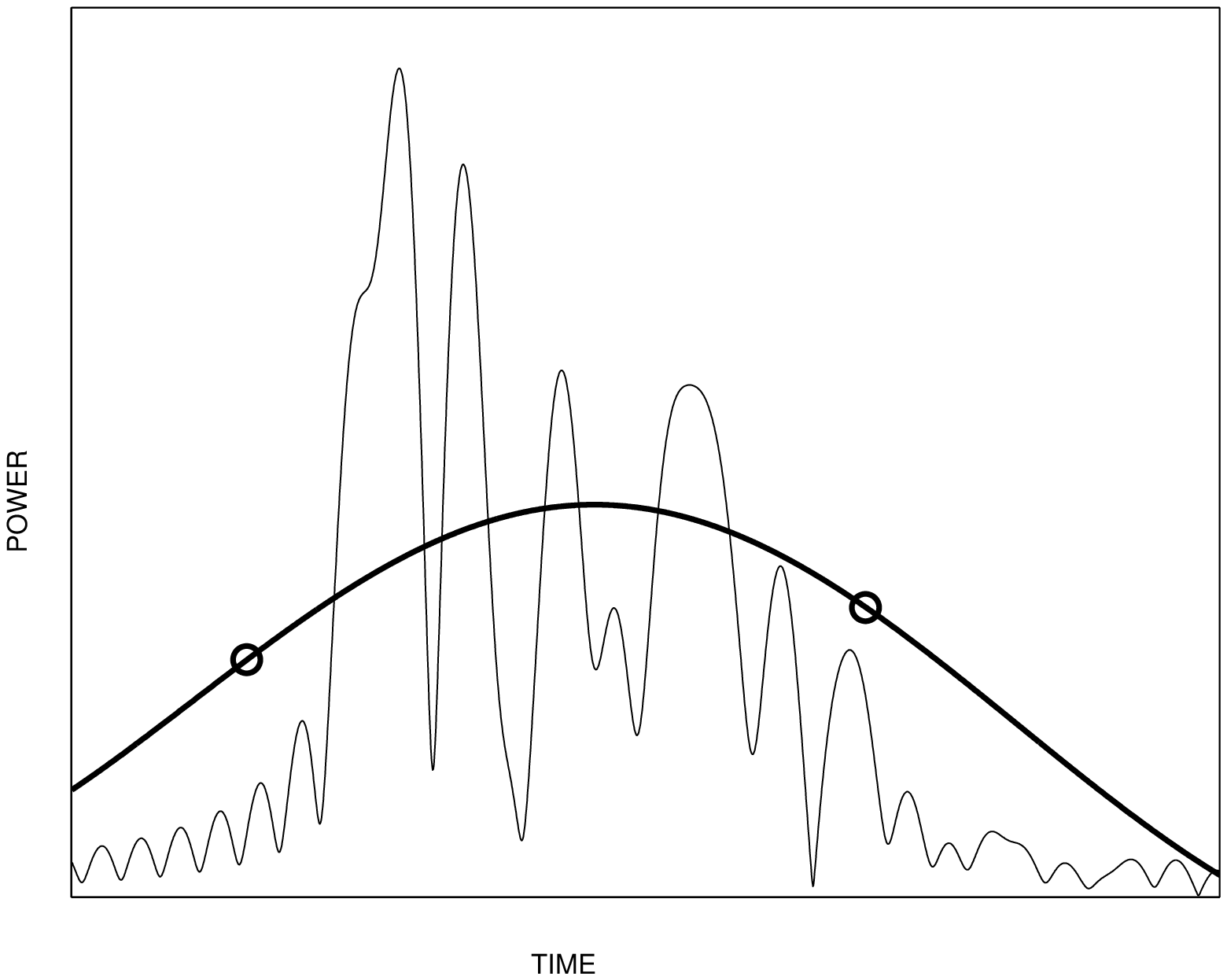}
\label{fig:visualization1}
}
\subfloat[Sampling in wideband]{
\psfrag{POWER}{Power}
\psfrag{TIME}{Time}
\includegraphics[width=0.48\linewidth]{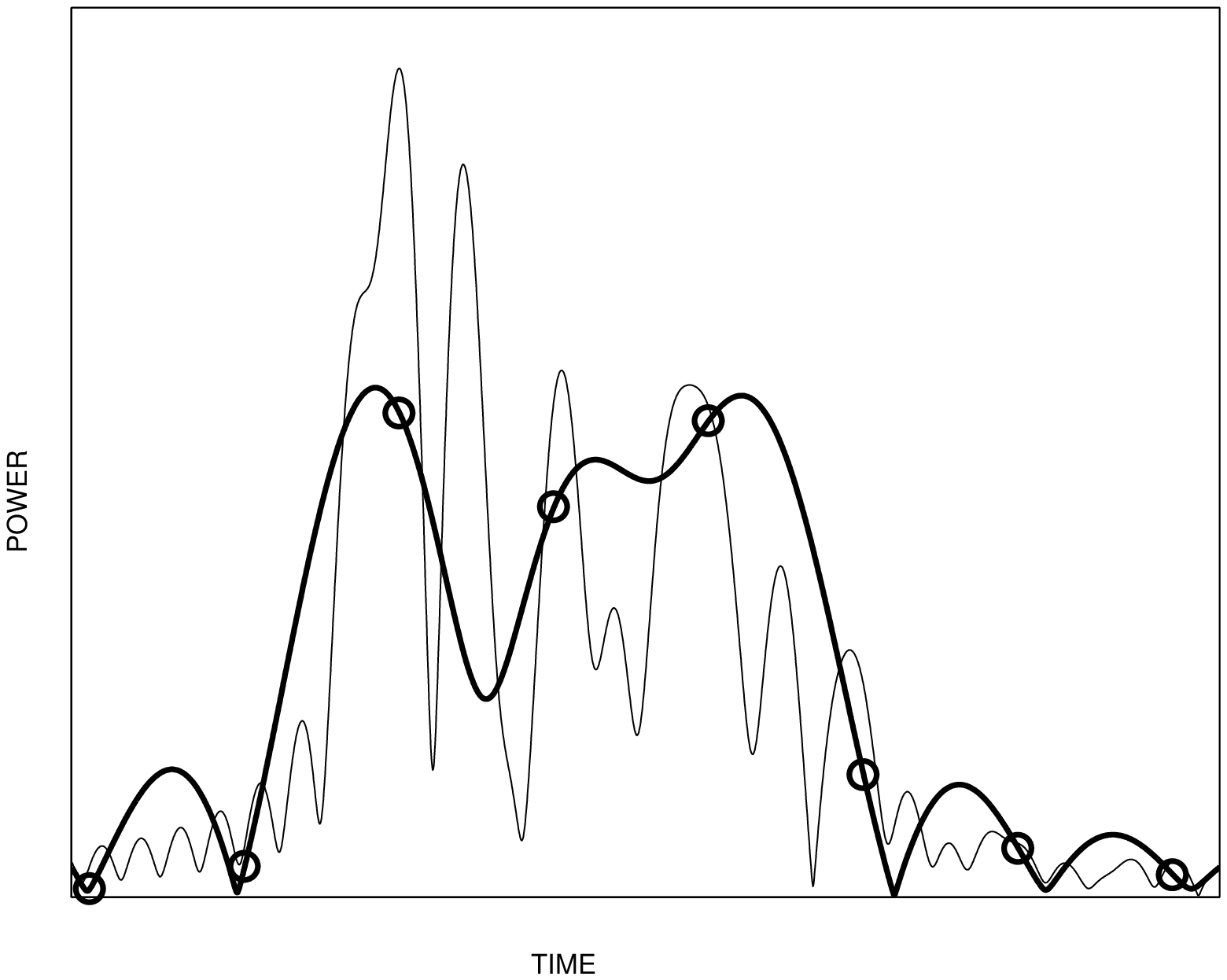}
\label{fig:visualization2}
}
\caption{An impulse response of bandwidth $128$MHz (thin line) is in
subfigure (a) filtered by a lowpassfilter of bandwidth $8$MHz (bold line), and
then sampled (circles). In subfigure (b), the same impulse response is filtered
by a lowpassfilter of bandwidth $64$MHz and then sampled. As can be observed,
both system resolution and degrees of freedom of the filtered impulse response
increase with increasing filter bandwidth.}
\label{fig:visualization}
\end{figure*}

\section{System Model}\label{sec:model}
We consider a linear time-invariant (LTI) system that is specified
by a complex continuous time impulse response $h$, which we assume in the
following to be in $\ltwoR$, the Hilbert space of energy limited signals. The
impulse
response is of finite length, that is, there exists some delay spread
$\dspread>0$ such
that 
\begin{align}
\forall \tau\notin[0,\dspread]\colon h(\tau)=0.
\end{align}
The input-output (I/O) relation between a transmitted signal $s$ and the
corresponding received signal $y$ is then given by the convolution
\begin{align}
 y(t)=\int\limits_{-\infty}^\infty
h(\tau)s(t-\tau)\de\tau.\label{eq:continuousIO}
\end{align}
If the input signal is bandlimited to $W,$ then the output signal is also
bandlimited to $W$ and we can define the \emph{effective impulse response}
$\heff$ of the LTI system as the result of lowpass filtering $h$ by
a unit-gain lowpass filter of bandwidth $W=1/T$. It is important to note that
the effective impulse response has no longer a finite delay spread, so we have
in general $\heff(\tau)\neq 0$ for $\tau\in\mysetR$. Applying the
sampling theorem, the system can be described by the discrete time I/O relation
given by
\begin{align}
 y[k]=\sum\limits_{l=-\infty}^\infty \heff[l]s[k-l].\label{eq:discreteIO}
\end{align}
Although we have in general $h_T[l]\neq 0$ for $l\in\mysetZ$, it is infeasible
in
practice to treat an infinite number of channel taps. We therefore truncate the
I/O relation and consider only a finite number $L$ of channel taps. The
truncated I/O relation is given by
\begin{align}
 y[k]=\sum\limits_{l=0}^{L-1} \heff[l]s[k-l].\label{eq:truncatedIO}
\end{align}
For the number $L$, an appropriate number corresponding to the delay spread
$\dspread$ should be chosen.

The relation \eqref{eq:truncatedIO} is a standard model for frequency selective
channels, see \cite{Tse2005}. The authors of \cite{Schuster2007} modeled an UWB
indoor environment by \eqref{eq:truncatedIO} and based their measurements on
this model.

We will take a closer look at how \eqref{eq:truncatedIO} is related to
\eqref{eq:continuousIO}. The discrete time I/O relation results from two
sampling processes, one at the sender and one at the receiver. We assume that
the drift between the two sampling clocks is already compensated. What remains
is a timing offset between the two sampling clocks. We model this offset
by including an arbitrary timing offset $d$ between sender and receiver in our
model. The timing offset $d$ can be written as 
\begin{align}
d=\left\lfloor \frac{d}{T}\right\rfloor T+\delta 
\end{align}
with $\delta\in[0,T)$. We assume that the system has already performed a large
scale acquisition of the timing and knows $\lfloor d/T\rfloor$. Without loss of
generality, we can therefore assume $\lfloor d/T\rfloor=0$. The samples are then
given by
\begin{align}
 y[k]=y(kT-\delta),\,\heff[l]=\heff(lT-\delta),\,s[k]=s(kT)
\end{align}
where $t$ is the timing reference at the sender and $t-\delta$ is the timing
reference at the receiver. For a certain timing offset $\delta$, the channel
candidate, which will be estimated by the receiver, is given by
\begin{align}
\htrunc_T^{(\delta)}[l]=\left\{\begin{array}{ll}
          \heff(lT-\delta),&0\leq l<L\\
	0,&\text{otherwise}.
         \end{array}\right.\label{eq:channelCandidate}
\end{align}
We set the number of channel taps $L$ equal to
\begin{align}
 L=\left\lfloor\frac{\dspread}{T}\right\rfloor.
\end{align}
This assignment is to a certain extend arbitrary. The trade-off between system
performance and complexity may lead to other values for $L$ in practice.

The small scale timing synchronization at the receiver now consists in finding
$\delta$ such that the corresponding channel candidate
\eqref{eq:channelCandidate} used in \eqref{eq:truncatedIO} represents
\eqref{eq:continuousIO} in the best possible way. To quantify the quality of a
certain channel candidate \eqref{eq:channelCandidate}, we look at the overall
channel gain $\lVert\htrunc_T^{(\delta)}\rVert^2$, which is given by
\begin{align}
 T\lVert\htrunc_T^{(\delta)}\rVert^2=\sum\limits_{l=0}^{L-1}T\lvert
\heff(lT-\delta)\rvert^2.
\end{align}
Depending on the bandwidth $W$ and the corresponding sampling time $T=1/W$, the
channel gain varies for different $\delta\in[0,T)$. As we will see in the
following sections, the variance of the channel gain goes to zero for $W$
tending to infinity.

\section{Analysis}\label{sec:analysis}
For the family of channel candidates \eqref{eq:channelCandidate}, we state the
limit property of the overall channel gain as a theorem.
\begin{theorem}\label{theo:channelGain}
Let the channel candidates as given in \eqref{eq:channelCandidate} be of
bandwidth $W=1/T$. Then, for every $\delta\in[0,1/W)$, the overall channel gain
converges to the maximum value given
by $\lVert\htrunc_T^{(\delta)}\rVert^2=\lVert h\rVert^2$ when the
bandwidth $W$ tends to infinity.
\end{theorem}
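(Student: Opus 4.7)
My plan is to combine Plancherel's theorem for $h$ with the sampling theorem for the bandlimited $\heff$, and to carry out the limit in the Fourier domain.

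First, since $\heff$ is the $L^2$-projection of $h$ onto signals bandlimited to $[-W/2,W/2]$, Plancherel gives $\|\heff\|^2=\int_{-W/2}^{W/2}|H(f)|^2\,df\to\|h\|^2$ as $W\to\infty$. Second, because $\heff$ has bandwidth $W=1/T$, the Shannon sampling theorem yields the $\delta$-independent identity $\|\heff\|^2=T\sum_{l\in\mysetZ}|\heff(lT-\delta)|^2$, so the claim reduces to showing that the out-of-window tail
\[
R_W:=T\sum_{l\notin[0,L)}|\heff(lT-\delta)|^2
\]
vanishes as $W\to\infty$.

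Third, I would identify the in-window partial sum in the Fourier domain. Substituting $\heff(lT-\delta)=\int_{-W/2}^{W/2}H(f)e^{2\pi if(lT-\delta)}\,df$ and summing the geometric series in $l$ gives
\[
T\sum_{l=0}^{L-1}|\heff(lT-\delta)|^2=\int_{-W/2}^{W/2}\int_{-W/2}^{W/2} H(f)\overline{H(f')}\,K_{W,L}(f-f')\,df'\,df,
\]
with the Dirichlet-type kernel
\[
K_{W,L}(\xi)=T\,e^{-2\pi i\xi\delta}\,\frac{e^{2\pi i\xi LT}-1}{e^{2\pi i\xi T}-1}.
\]
This kernel is uniformly bounded by $LT\le\dspread$ and, using $LT\to\dspread$ and $\delta\in[0,T)$, converges pointwise to $(e^{2\pi i\xi\dspread}-1)/(2\pi i\xi)=\int_0^{\dspread}e^{2\pi i\xi t}\,dt$ as $W\to\infty$. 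Passing to the limit inside the double integral and applying Fubini together with Plancherel for $h$ (which is supported on $[0,\dspread]$), the right-hand side converges to $\int_0^{\dspread}|h(t)|^2\,dt=\|h\|^2$. Combined with the first step this forces $R_W\to 0$ and yields the theorem.

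The main technical obstacle is justifying the interchange of limit and integration, since the $(f,f')$ domain grows with $W$ while the kernel is only uniformly bounded, not decaying. I would handle this by first truncating $H$ outside a large but fixed frequency window (permissible since $H\in L^2$ via Parseval), applying dominated convergence on the truncated integral where both domain and integrand are controlled, and then letting the cutoff tend to infinity while bounding the remainder by the uniform kernel bound $\dspread$ together with Cauchy--Schwarz in each of $f$ and $f'$.
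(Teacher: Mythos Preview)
Your approach is correct and takes a genuinely different route from the paper's. The paper stays entirely in the time domain: it decouples the filter bandwidth (parameter $T$) from the sampling grid (an auxiliary parameter $T'$) and chains three $\epsilon$-approximations --- Plancherel for $\lVert\heff\rVert\to\lVert h\rVert$, uniform continuity of translation in $\ltwoR$ to absorb the offset $\delta$, and Riemann-sum convergence of $\lvert\heff(\cdot-\delta)\rvert^2$ on the compact interval $[0,\dspread]$ --- before re-identifying $T'=T$ at the end. Your argument instead packages the truncated sum as the bilinear form $\iint H(f)\overline{H(f')}\,K_{W,L}(f-f')\,df\,df'$ with an explicit Dirichlet kernel, which makes the joint limit in $(T,\delta)$ transparent: all dependence on $\delta$ and on $LT$ sits inside $K_{W,L}$, whose pointwise limit and uniform bound $\lvert K_{W,L}\rvert\le LT\le\dspread$ are read off directly. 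This sidesteps the joint-limit bookkeeping that forces the paper to introduce the extra parameter $T'$.

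One caution on your last step: taken literally, ``$\lvert K\rvert\le\dspread$ plus Cauchy--Schwarz in each of $f$ and $f'$'' yields only an $L^1\times L^1$ estimate on the remainder, and you control $H$ merely in $L^2$. The clean fix is to recall the time-domain origin of the form: writing it as $B_W(G,G')=T\sum_{l=0}^{L-1} g(lT-\delta)\,\overline{g'(lT-\delta)}$ (with $g,g'$ the inverse transforms of $G,G'$) shows that $B_W$ is positive semidefinite, so Cauchy--Schwarz on the finite sum together with the sampling Parseval identity you already invoked gives the operator bound $\lvert B_W(G,G')\rvert\le\lVert G\rVert_2\,\lVert G'\rVert_2$. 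This controls the frequency-truncation remainder in $L^2$ uniformly in $W$, after which dominated convergence on the fixed window $[-M,M]^2$ (using the bound $\lvert K_{W,L}\rvert\le\dspread$) finishes the argument exactly as you outline.
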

Before we can give the proof of the theorem, we state Plancherel's
Theorem. It can be found e.g. in \cite{Paley1934}.
\begin{theorem}(Plancherel) \label{plancherel}
For any $x\in\ltwoR$ (the Hilbert space of
energy limited signals), there exists a function $\fouop\{x\}\in\ltwoR$, such
that 
\begin{align}
\lim_{T\rightarrow\infty}\int\limits_{-\infty}^\infty\Bigl\lvert
\fouop\{x\}(f)-\int\limits_{-\frac{T}{2}}^{\frac{T}{2}}x(t)e^{-j2\pi ft}\de
t\Bigr\rvert^2\de f = 0
\end{align}
and
\begin{align}
\lim_{W\rightarrow\infty}\int\limits_{-\infty}^\infty\Bigl\lvert
x(t)-\int\limits_{-\frac{W}{2}}^{\frac{W}{2}}\fouop\{x\}(f)e^{j2\pi ft}\de
f\Bigr\rvert^2\de t = 0.
\end{align}
The function $\fouop\{x\}$ is called the \emph{Fourier transform} of $x$.
\end{theorem}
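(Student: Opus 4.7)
The plan is to reduce the discrete channel gain to the continuous $L^2$ norm of $\heff$ via a Plancherel-type sampling identity, invoke Theorem~\ref{plancherel} to show this norm tends to $\lVert h\rVert^2$, and then bound the error introduced by truncating the sum to the first $L=\lfloor\dspread/T\rfloor$ taps.

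Since $\heff$ is bandlimited to $[-W/2,W/2]$, the shifted sincs $\sinc((t-lT+\delta)/T)$ form an orthogonal system in its Paley--Wiener subspace of $\ltwoR$, each of squared norm $T$. Combining the Shannon expansion $\heff(t)=\sum_l\heff(lT-\delta)\sinc((t-lT+\delta)/T)$ with Parseval yields
\begin{align}
T\sum_{l\in\mysetZ}\lvert\heff(lT-\delta)\rvert^2=\lVert\heff\rVert^2.\nonumber
\end{align}
Because $\heff$ has Fourier transform $H\cdot\rect(f/W)$ with $H=\fouop\{h\}$, Theorem~\ref{plancherel} applied to $h$ gives $\lVert\heff\rVert^2=\int_{-W/2}^{W/2}\lvert H(f)\rvert^2\de f$, and monotone convergence drives this to $\int_{-\infty}^{\infty}\lvert H(f)\rvert^2\de f=\lVert h\rVert^2$ as $W\to\infty$.

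The main obstacle is controlling the truncation, i.e.\ showing that
\begin{align}
T\sum_{l\notin[0,L-1]}\lvert\heff(lT-\delta)\rvert^2\longrightarrow 0\nonumber
\end{align}
as $W\to\infty$. The intuition is that the discarded sampling instants $lT-\delta$ lie essentially outside the support $[0,\dspread]$ of $h$, where $\heff$ becomes small because $\heff\to h$ in $L^2$. To make this quantitative I plan to integrate by parts in the convolution integral defining $\heff$ over $[0,\dspread]$ to obtain a pointwise decay estimate of the form $\lvert\heff(t)\rvert=O(1/(W\lvert t\rvert))$ for $t$ bounded away from the support; substituting into the tail gives $T\lvert\heff(lT-\delta)\rvert^2=O(1/(Wl^2))$, which sums to $O(1/W)$. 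The finitely many samples whose times fall within a $T$-neighborhood of $\{0,\dspread\}$ must be treated separately via pointwise bounds on $\heff$ near those endpoints, which remain $O(1)$ under mild regularity of $h$ (imposed, if needed, by a density argument in $\ltwoR$).
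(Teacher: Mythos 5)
Your proposal does not prove the stated theorem. The statement at hand is Plancherel's theorem itself: the existence, for every $x\in\ltwoR$, of a Fourier transform $\fouop\{x\}\in\ltwoR$ realized as the limit in mean of the truncated integrals, together with the inversion in mean. What you have written is instead a proof sketch for Theorem~\ref{theo:channelGain} (convergence of the overall channel gain), and it explicitly \emph{invokes} Theorem~\ref{plancherel} as an ingredient (``Theorem~\ref{plancherel} applied to $h$ gives $\lVert\heff\rVert^2=\int_{-W/2}^{W/2}\lvert H(f)\rvert^2\de f$''), which is circular if the goal is to establish Plancherel. Note that the paper itself offers no proof of this theorem; it is quoted as a classical result with a reference to \cite{Paley1934}. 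An actual proof would follow standard lines: define the transform on a dense subspace of $\ltwoR$ (e.g.\ $\mathcal{L}^1\cap\mathcal{L}^2$, the Schwartz class, or step functions), verify the isometry $\lVert\fouop\{x\}\rVert=\lVert x\rVert$ there, extend by continuity to all of $\ltwoR$, and identify the extension with the limit in mean of $\int_{-T/2}^{T/2}x(t)e^{-j2\pi ft}\de t$. None of these steps appears in your proposal.

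As a secondary remark: even read as an attempt at Theorem~\ref{theo:channelGain}, your route diverges from the paper's (which uses three approximations: $\heff\to h$ in $\ltwoR$ as $T\to 0$, uniform $L^2$-continuity of translation, and Riemann-sum convergence of the continuous function $\lvert\heff(\cdot-\delta)\rvert^2$ over the compact interval $[0,\dspread]$). Your Shannon/Parseval identity $T\sum_{l}\lvert\heff(lT-\delta)\rvert^2=\lVert\heff\rVert^2$ for the full sum is a clean alternative starting point, but the tail bound $\lvert\heff(t)\rvert=O(1/(W\lvert t\rvert))$ obtained by integration by parts needs regularity of $h$ beyond mere membership in $\ltwoR$, and the closing density argument is only gestured at. The primary defect, however, remains that the proposal targets the wrong statement.
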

We assume in the following that for all functions $x$ of interest, the integral
\begin{align}
 \int\limits_{-\infty}^\infty x(t)e^{-j2\pi ft}\de t\label{eq:fourierIntegral}
\end{align}
exists. The Fourier transform $\fouop\{x\}$ is then given by
\eqref{eq:fourierIntegral}. We will need the following convention in the proof
of Theorem~\ref{theo:channelGain}. We define the effective impulse response
$\heff$ point-wise by the inverse Fourier transform of $\fouop\{\heff\}$ given
by
\begin{align}
 \heff(\tau)=\int\limits_{-\infty}^{\infty}\fouop\{\heff\}(f)e^{j2\pi f\tau}\de
f.\label{eq:continuous}
\end{align}
It can easily be shown that with this definition, $\heff$ is continuous.

\begin{proof}[Proof of Theorem \ref{theo:channelGain}]
The energy of $\htrunc_T^{(\delta)}$ is bounded from above by
\begin{align}
 \lVert h\rVert^2&\geq \lVert\heff\rVert^2\\
&\geq T\lVert\htrunc_T^{(\delta)}\rVert^2\\
&=\sum\limits_{l=0}^{\lfloor\frac{D_s}{T}\rfloor-1}
T\bigl|h_T(lT-\delta)\bigr|^2\label{eq:objectOfInterest}
\end{align}
where $\delta\in[0,T)$. We show that \eqref{eq:objectOfInterest} converges to
$\lVert h\rVert^2$ for $T\rightarrow 0$. Calculating first the
limit for $\delta\rightarrow 0$ and then the limit for $T\rightarrow 0$ would
lead to the desired result, but this implies not necessarily that
\eqref{eq:objectOfInterest} converges to the same value if $\delta$ and $T$
tend jointly to zero along any path with $\delta<T$, but it is the latter we
have to show.

We introduce the auxiliary parameter $T'$ and write
\begin{align*}
\sum\limits_{l=0}^{\lfloor\frac{D_s}{T'}\rfloor-1}
T'\bigl|h_T(lT'-\delta)\bigr|^2.
\end{align*}
The introduction of the parameter $T'$ can be interpreted as a separation of the
sampling process from the lowpass filtering process.

We now approximate $\lVert h\rVert^2$ in three steps.

\parmark{1. Approximation}
From Theorem \ref{plancherel}, it follows that the integral
\begin{align}
  \int\limits_{-\infty}^\infty \lvert h(\tau)-\heff(\tau)\rvert^2 \de \tau
\end{align}
goes to $0$ for $T\rightarrow 0$. Since the support of $h$
is $[0,D_s]$, also 
\begin{align}
  \int\limits_0^{D_s} \lvert h(\tau)-\heff(\tau)\rvert^2 \de{\tau}
\end{align}
goes to $0$. For any $v,w\in\ltwoR$, the reverse triangular inequality
states that 
\begin{align}
  \lvert\lVert v\rVert-\lVert w\rVert\rvert\leq \lVert
v-w\rVert.\label{eq:reverseTriangular}
\end{align}
It follows that for any $\epsilon_1>0$ there exists a $T_1$ such that
\begin{align}
  \left\rvert\int\limits_0^{D_s} \lvert h(\tau)\rvert^2
    \de \tau-\int\limits_0^{D_s} \lvert \heff(\tau)\rvert^2
    \de \tau\right\rvert<\epsilon_1
\end{align}
for all $T<T_1$.

\parmark{2. Approximation} For every $\delta\in\mysetR$, let $\heff^{(\delta)}$
denote the translation of $\heff$ defined by
$\heff^{(\delta)}(\tau)=\heff(\tau-\delta)$, $\tau\in\mysetR$. According
to \cite[Theorem 9.5]{Rudin1987}, the mapping
\begin{align}
\delta\mapsto \heff^{(\delta)}
\end{align}
is uniformly continuous in $\ltwoR$. Together with the reverse triangular
inequality \eqref{eq:reverseTriangular}, this implies that for every
$\epsilon_2>0$,
there exists a $\delta_1$ such that
  \begin{align}\label{synchronization:secondApproximation}
    \left\lvert\int\limits_0^{D_s}\lvert\heff(\tau)\rvert^2
      \de \tau -\int\limits_0^{D_s} \lvert\heff(\tau-\delta)\rvert^2
      \de \tau \right\rvert<\epsilon_2
  \end{align}
for all $\delta<\delta_1$.

\parmark{3. Approximation} Since $\heff$ is continuous (by convention
\eqref{eq:continuous})
and since we consider it over a compact interval, the term
\begin{align*}
 \sum\limits_{l=0}^{\lfloor\frac{D_s}{T'}\rfloor-1}T'\lvert
 \heff(lT'-\delta)\rvert^2
\end{align*}
is equal to the Riemann sum of $\lvert h_T(\tau-\delta)\rvert^2$ over the
interval $[0,D_s]$. For $T'\rightarrow 0$, the Riemann sum converges to the
integral. Therefore, for every $\epsilon_3>0$, there exists a $T'_1$ such that
  \begin{align}
\left\lvert\int\limits_0^{D_s} \lvert\heff(\tau-\delta)\rvert^2\de
\tau-\hspace{-0.2cm}\sum\limits_{l=0}^{\lfloor\frac{D_s}{T'}\rfloor-1}\hspace{
-0.1cm} T'\lvert
 \heff(lT'-\delta)\rvert^2\right\rvert\hspace{-0.1cm}<\epsilon_3
\end{align}
for all $T'<T'_1$. We define $\epsilon_0=\epsilon_1+\epsilon_2+\epsilon_3$ and
$T_0=\min\{T_1,T'_1,\delta_1\}$. Combining the three approximations, we
have shown that for every $\epsilon_0>0$, there exists a $T_0$ such that
\begin{align}
\left\lvert\lVert
h\rVert^2-\sum\limits_{l=0}^{\lfloor\frac{D_s}{T'}\rfloor-1} T'\lvert
 \heff(lT'-\delta)\rvert^2\right\rvert<\epsilon_0
\end{align}
for all $T,T',\delta < T_0$. This result is also valid for $T'=T$ and $\delta$
and $T$ jointly tending to zero with $0\leq \delta <T$. This concludes the
proof.
\end{proof}
For a further discussion of this result we refer to
Section~\ref{sec:simulation}.

\section{Simulation}\label{sec:simulation}
\begin{center}
\begin{table*}
\caption{Simulation Results for $\dspread=279\nanos$}
\vspace{0.2cm}
\begin{center}
 \begin{tabular}{lll|rrrrrrrrr}
\hline
%\multicolumn{2}{l|}{\tiny}&\multicolumn{9}{l}{\tiny}\\
\parbox[b]{0cm}{\vspace{0.06cm\ }}&\hspace{-0.4cm}$W$&[MHz]
&\hspace{-0.4cm}4&8&16&32&64&128&256&512&1024\\
&\hspace{-0.4cm}$\powerPenalty$&[\%]&68.9&48.0&25.8&11.7&5.8&2.5&1.9&0.9&0.4\\
&\hspace{-0.4cm}$\bar{\powerPenalty}$&[\%]
&46.8&16.8&10.0&4.3&2.1&0.9&0.4&0.2&0.1\\
\multicolumn{3}{l|}{\hspace{0.05cm}$L=\lfloor\dspread
W\rfloor$}&1&2&4&8&17&35&71&142&285
\\[0.1cm]
\hline 
\end{tabular}
\end{center}
\label{tab:simulationResults}
\end{table*}
\end{center}
We simulate the effect of the timing offset $\delta$ onto the overall gain
of channel candidates by calculating $\lVert\htrunc_T^{(\delta)}\rVert^2$
for increasing bandwidths $W$. As a data set, we use $100$ impulse responses
according to the IEEE 802.15.4a UWB channel model defined in \cite{Molisch2004}
by using the MATLAB script \url{uwb_sv_eval_ct_15_4a.m}, which is also provided
in \cite{Molisch2004}. The generated impulse responses have a normalized delay
spread of $\dspread=279\mathrm{ns}$. Since the data set already comes in digital
form and since we are limited to digital signal processing in simulation, it is
difficult to compare all possible channel candidates for $\delta\in[0,T)$,
$\delta$ continuous. We therefore resort to the following. We consider a generic
all digital receiver that has a small scale timing synchronizer with a time
resolution of $T/4$. With this receiver in mind, we compare the channel
candidates for every impulse response $h^{(i)}$ from the data set and every
considered bandwidth $W=1/T$ in the following way.

\parmark{Step 1} We lowpass filter $h^{(i)}$ to obtain $\heff^{(i)}$.

\parmark{Step 2} We randomly generate a small timing offset $\varepsilon$
uniformly distributed over $[0,T/4)$. The generic receiver in mind will not
resolve $\varepsilon$.

\parmark{Step 3} For $m=0,\dotsc,3$, we consider the sequence
\begin{align}
\heff^{(i)}\lefto(lT-\varepsilon-\frac{mT}{4}\right)
\end{align}
and calculate the channel gain of the truncated impulse response to
\begin{align}
\lVert \htrunc_T^{(i,m)}\rVert^2=\max_{k}\sum\limits_{l=0}^{\lfloor
\frac{D_s}{T}\rfloor-1}T\lvert\heff^{(i)}(kT+lT-\varepsilon-\frac{mT}{4}
)\rvert^2.
\end{align}
At the receiver, the maximization over $k$ corresponds to an energy based large
scale timing acquisition. The maximum relative channel gain loss when
synchronizing without a small scale timing synchronizer is given by
\begin{align}
 \powerPenalty_{T,\max}^{(i)}=1-\frac{\min_{m}\lVert
\htrunc^{(i,m)}\rVert^2}{\max_{m}\lVert \htrunc^{(i,m)}\rVert^2}.
\end{align}
\begin{figure}
\small
\centering
\psfrag{Penalty [%]}{Penalty $[\%]$}
\psfrag{0}{$0$}
\psfrag{10}{$10$}
\psfrag{20}{$20$}
\psfrag{30}{$30$}
\psfrag{40}{$40$}
\psfrag{50}{$50$}
\psfrag{60}{$60$}
\psfrag{70}{$70$}
\psfrag{maximum_penalty}{maximum penalty $\powerPenalty_T$}
\psfrag{average_penalty}{average penalty $\bar{\powerPenalty}_T$}
\psfrag{10MHz}{$10$}
\psfrag{100MHz}{$100$}
\psfrag{1000MHz}{$1000$}
\psfrag{Bandwidth}{Bandwidth $[\mhz]$}
\includegraphics[width=\linewidth]{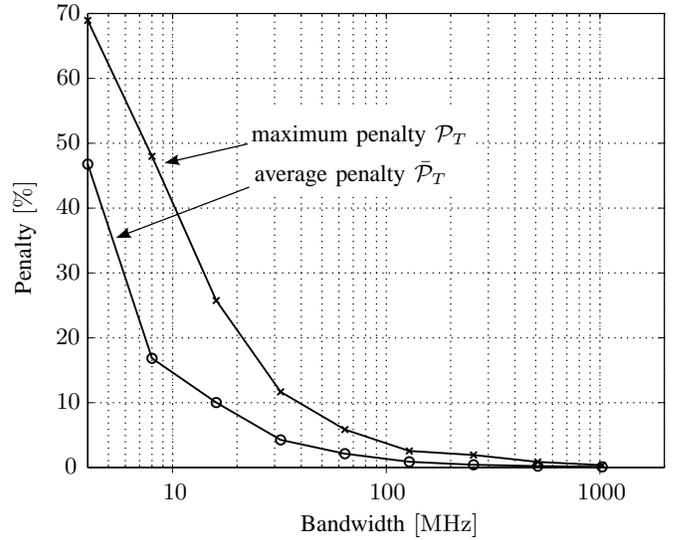}
\caption{The average relative penalty $\bar{\powerPenalty}_T$ and the maximum
relative penalty $\powerPenalty_T$ for timing acquisition without
small scale timing synchronization as a function of the bandwidth $W=1/T$.}
\label{fig:powerRatios}
\end{figure}
We assess the simulation results by considering both the maximum relative
performance penalty and the average relative performance penalty given by
\begin{align}
\powerPenalty_T= \max_i
\powerPenalty_{T,\max}^{(i)}\quad\text{and}\quad\bar{\powerPenalty}_T=
\frac{1}{100}\sum\limits_{i=1}^{100}\powerPenalty_{T,\max}^{(i)}
\end{align}
as functions of the sampling time $T=1/W$. In Table~\ref{tab:simulationResults},
we provide the simulation results consisting of the corresponding values
for the bandwidth, the worst case penalty, the average penalty, and the number
of channel taps. As can be seen from
Figure~\ref{fig:powerRatios}, both the worst case penalty and the average
penalty decrease monotonically with an increasing bandwidth. Both
curves converge to $0$, which corresponds to our analytic result from the
previous section.

We observe that, roughly for $L=\lfloor \dspread W\rfloor>10$,
synchronizing with a resolution of $T$ is sufficient in terms of channel gain,
since the maximum penalty is smaller than $5\%$. On the other hand, in a
narrowband scenario with $L=\lfloor \dspread W\rfloor<4$, the penalty due to a
missing small scale synchronizer can be over $50\%$.

\section{Conclusions}
We have shown that for communications strictly bandlimited to $W$ over 
an LTI channel with delay spread $\dspread$, the truncated discrete time channel
model
\begin{align}
  y[k]=\sum\limits_{l=0}^{L-1} h[l]x[k-l],\quad L\approx
\lfloor \dspread W\rfloor
\end{align}
is robust against unknown timing offsets $\delta\in[0,1/W)$ between
sender and receiver in the limit when the bandwidth $W$ of the considered system
goes to infinity. A simulation shows that this theoretic result is valid for the
IEEE 802.15.4a channel model for $W>50$MHz. As a receiver design criterion, this
means that small scale synchronization with a precision higher than the sampling
time $T=1/W$ is not necessary for UWB communication systems. As an extension of
our work, it may be of interest to investigate if the robustness against the
small scale timing offset $\delta$ remains when drift compensation, large scale
timing acquisition and channel estimation in the design of a wideband receiver
are considered jointly.

\section*{Acknowledgment}
This work was partly supported by the Deutsche Forschungsgemeinschaft (DFG)
project UKoLoS (grant MA 1184/14-1) and the UMIC excellence cluster of RWTH
Aachen University.

\bibliographystyle{IEEEtran}

\bibliography{IEEEabrv,Literatur}

\end{document}